\documentclass[12pt]{amsart}
\usepackage{amsmath,amssymb,amsbsy,amsfonts,latexsym,amsopn,amstext,cite,
                                               amsxtra,euscript,amscd,bm}
\usepackage{url}
\usepackage[colorlinks,linkcolor=blue,anchorcolor=blue,citecolor=blue,backref=page]{hyperref}
\usepackage{color}
\usepackage{graphics,epsfig}
\usepackage{graphicx}
\usepackage{float}

\hypersetup{breaklinks=true}

\renewcommand*{\backref}[1]{}
\renewcommand*{\backrefalt}[4]{%
    \ifcase #1 (Not cited.)%
    \or        (p.\,#2)%
    \else      (pp.\,#2)%
    \fi}

\usepackage{mathtools}
\usepackage{todonotes}

\begin{document}

\newtheorem{theorem}{Theorem}
\newtheorem{lemma}[theorem]{Lemma}
\newtheorem{claim}[theorem]{Claim}
\newtheorem{cor}[theorem]{Corollary}
\newtheorem{prop}[theorem]{Proposition}
\newtheorem{definition}[theorem]{Definition}
\newtheorem{question}[theorem]{Question}
\newcommand{\hh}{{{\mathrm h}}}

\numberwithin{equation}{section}
\numberwithin{theorem}{section}
\numberwithin{table}{section}

\def\sssum{\mathop{\sum\!\sum\!\sum}}
\def\ssum{\mathop{\sum\ldots \sum}}
\def\iint{\mathop{\int\ldots \int}}

\def\squareforqed{\hbox{\rlap{$\sqcap$}$\sqcup$}}
\def\qed{\ifmmode\squareforqed\else{\unskip\nobreak\hfil
\penalty50\hskip1em\null\nobreak\hfil\squareforqed
\parfillskip=0pt\finalhyphendemerits=0\endgraf}\fi}

\newfont{\teneufm}{eufm10}
\newfont{\seveneufm}{eufm7}
\newfont{\fiveeufm}{eufm5}
%
%
\newfam\eufmfam
     \textfont\eufmfam=\teneufm
\scriptfont\eufmfam=\seveneufm
     \scriptscriptfont\eufmfam=\fiveeufm
%
%
\def\frak#1{{\fam\eufmfam\relax#1}}

\newcommand{\bflambda}{{\boldsymbol{\lambda}}}
\newcommand{\bfmu}{{\boldsymbol{\mu}}}
\newcommand{\bfxi}{{\boldsymbol{\xi}}}
\newcommand{\bfrho}{{\boldsymbol{\rho}}}

\def\fK{\mathfrak K}
\def\fT{\mathfrak{T}}

\def\fA{{\mathfrak A}}
\def\fB{{\mathfrak B}}
\def\fC{{\mathfrak C}}
\def\fM{{\mathfrak M}}

\def \balpha{\bm{\alpha}}
\def \bbeta{\bm{\beta}}
\def \bgamma{\bm{\gamma}}
\def \blambda{\bm{\lambda}}
\def \bchi{\bm{\chi}}
\def \bphi{\bm{\varphi}}
\def \bpsi{\bm{\psi}}
\def \bomega{\bm{\omega}}
\def \btheta{\bm{\vartheta}}

\def\nrp#1{\left\|#1\right\|_p}
\def\nrq#1{\left\|#1\right\|_m}
\def\nrqk#1{\left\|#1\right\|_{m_k}}
\def\Ln#1{\mbox{\rm {Ln}}\,#1}
\def\nd{\hspace{-1.2mm}}
\def\ord{{\mathrm{ord}}}
\def\Cc{{\mathrm C}}
\def\Pb{\,{\mathbf P}}
\def\Oes{\cO_{e,s}}
\def\Oesf{\cO_{e,s,f}}
\def\Oesfx{\cO_{e,s,f}(x)}
\def\Oesx{\cO_{e,s}(x)}
\def\OAs{\cO_{\cA,s}}
\def\Oet{\cO_{e,t}}
\def\Oetg{\cO_{e,t,g}}
\def\Oetx{\cO_{e,t}(x)}
\def\Oef{\fO_{e,f}}
\def\Oeg{\fO_{e,g}}
\def\fO{\mathfrak{O}}
\def\fQ{\mathfrak{Q}}

\def \bxi{\bm{\xi}}

\def\eqref#1{(\ref{#1})}

\def\vec#1{\mathbf{#1}}


\def\cA{{\mathcal A}}
\def\cB{{\mathcal B}}
\def\cC{{\mathcal C}}
\def\cD{{\mathcal D}}
\def\cE{{\mathcal E}}
\def\cF{{\mathcal F}}
\def\cG{{\mathcal G}}
\def\cH{{\mathcal H}}
\def\cI{{\mathcal I}}
\def\cJ{{\mathcal J}}
\def\cK{{\mathcal K}}
\def\cL{{\mathcal L}}
\def\cM{{\mathcal M}}
\def\cN{{\mathcal N}}
\def\cO{{\mathcal O}}
\def\cP{{\mathcal P}}
\def\cQ{{\mathcal Q}}
\def\cR{{\mathcal R}}
\def\cS{{\mathcal S}}
\def\cT{{\mathcal T}}
\def\cU{{\mathcal U}}
\def\cV{{\mathcal V}}
\def\cW{{\mathcal W}}
\def\cX{{\mathcal X}}
\def\cY{{\mathcal Y}}
\def\cZ{{\mathcal Z}}
\newcommand{\rmod}[1]{\: \mbox{mod} \: #1}

\def\cg{{\mathcal g}}

\def\vr{\mathbf r}

\def\e{{\mathbf{\,e}}}
\def\ep{{\mathbf{\,e}}_p}
\def\eq{{\mathbf{\,e}}_q}

\def\Tr{{\mathrm{Tr}}}
\def\Nm{{\mathrm{Nm}}}

 \def\SS{{\mathbf{S}}}

\def\lcm{{\mathrm{lcm}}}
\def\Res{{\mathrm{Res}}}

\def\({\left(}
\def\){\right)}
\def\l|{\left|}
\def\r|{\right|}
\def\fl#1{\left\lfloor#1\right\rfloor}
\def\rf#1{\left\lceil#1\right\rceil}

\def\mand{\qquad \mbox{and} \qquad}

\newcommand{\commK}[1]{\marginpar{%
\begin{color}{red}
\vskip-\baselineskip 
\raggedright\footnotesize
\itshape\hrule \smallskip B: #1\par\smallskip\hrule\end{color}}}

\newcommand{\commI}[1]{\marginpar{%
\begin{color}{magenta}
\vskip-\baselineskip 
\raggedright\footnotesize
\itshape\hrule \smallskip I: #1\par\smallskip\hrule\end{color}}}

\newcommand{\commT}[1]{\marginpar{%
\begin{color}{blue}
\vskip-\baselineskip 
\raggedright\footnotesize
\itshape\hrule \smallskip I: #1\par\smallskip\hrule\end{color}}}




\hyphenation{re-pub-lished}

\mathsurround=1pt

\def\bfdefault{b}

\def \F{{\mathbb F}}
\def \K{{\mathbb K}}
\def \Z{{\mathbb Z}}
\def \Q{{\mathbb Q}}
\def \R{{\mathbb R}}
\def \C{{\mathbb C}}

\def \Fbar{\overline{\mathbb F}}

\def\Kmnp{\cK_p(m,n)}
\def\Kmnq{\cK_q(m,n)}
\def\Klmnp{\cK_p(\ell, m,n)}
\def\Klmnq{\cK_q(\ell, m,n)}

\def \SALMNq {\cS_q(\balpha;\cL,\cM,\cN)}
\def \SALMNp {\cS_p(\balpha;\cL,\cM,\cN)}

\def\SAMJp{\cS_p(\balpha;\cM,\cJ)}
\def\SAMJq{\cS_q(\balpha;\cM,\cJ)}
\def\SAJq{\cS_q(\balpha;\cJ)}
\def\SAIJp{\cS_p(\balpha;\cI,\cJ)}
\def\SAIJq{\cS_q(\balpha;\cI,\cJ)}

\def\RIJp{\cR_p(\cI,\cJ)}
\def\RIJq{\cR_q(\cI,\cJ)}

\def\TWXJp{\cT_p(\bomega;\cX,\cJ)}
\def\TWXJq{\cT_q(\bomega;\cX,\cJ)}
\def\TWJq{\cT_q(\bomega;\cJ)}

 \def \xbar{\overline x}
  \def \ybar{\overline y}

\newcommand{\ind}{\mathrm{ind\,}}

\title[Identity Testing and Interpolation
from High Powers]{Identity Testing and Interpolation
from High Powers of Polynomials of Large Degree over Finite Fields}

\author[M. Karpinski]{Marek Karpinski}
\address{Department of Computer Science, Bonn University, 
 53113 Bonn, Germany} 
\email{marek@cs.uni-bonn.de}

\author[L. M\'erai]{L\'aszl\'o M\'erai}
\address{Johann Radon Institute for Computational and Applied Mathematics, Austrian Academy of Sciences and Institute of Financial Mathematics and Applied Number Theory,
Johannes Kepler University,  Altenberger Stra\ss e 69, A-4040 Linz, Austria} 
\email{laszlo.merai@oeaw.ac.at}

\author[I. E.~ShparlinskI]{Igor E.~Shparlinski}
\address{Department of Pure Mathematics, University of 
New South Wales, Sydney, NSW 2052 Australia}
\email{igor.shparlinski@unsw.edu.au}

\begin{abstract}
We consider the problem of identity testing and recovering (that is, interpolating)  of a ``hidden'' monic polynomials $f$, given an oracle access to $f(x)^e$ for $x\in\F_q$, where $\F_q$ is the finite field of  $q$ elements and an extension fields access is not permitted.

 The naive 
interpolation algorithm needs $de+1$ queries, where $d =\max\{\deg f, \deg g\}$ 
and thus requires $ de<q$. For a prime $q = p$, we design an algorithm that is asymptotically better in certain cases, especially when $d$ is large. The algorithm is based on 
a result of independent interest in spirit of additive combinatorics. 
It  gives an upper bound on the 
number of values of a rational function of large degree, evaluated on
a short  sequence of consecutive integers, that belong to a small subgroup of $\F_p^*$.  
\end{abstract}

\keywords{hidden polynomial power, black-box interpolation, Nullstellensatz,  rational function, deterministic algorithm, randomised algorithm, quantum algorithm}

\subjclass{11T06,  11Y16, 68Q12, 68Q25}

\maketitle
\section{Introduction}

\subsection{Background and previous results} 
The following variant of polynomial identity problem has been motivated
 by some cryptographic applications, see~\cite{BGKS, IKSSS}
 for further discussion  and the references. 
 
 Let $\F_q$ be  the finite field of  $q$ elements of characteristic $p$. 
We  consider the
{\it Identity Testing from Powers\/}  for two ``hidden'' monic polynomials $f,g\in \F_q[X]$:
\begin{quote}
given 
oracles $\Oef$  and   $\Oef$ that on every input $x \in \F_q$ output
$\Oef(x) = f(x)^e$ and $\Oeg(x) = g(x)^e$ for some large positive integer $e\mid q-1$,  
decide whether $f = g$.
\end{quote}

We also consider the following problem {\it Interpolation from Powers\/} for a ``hidden'' monic polynomials $f\in \F_q[X]$:
\begin{quote}
given oracle $\Oef$ that on every input $x \in \F_q$ outputs
$\Oef(x) = f(x)^e$ for some large positive integer $e\mid q-1$,
recover $f$.
\end{quote}

In particular, for a linear polynomial 
$f(X) = X+s$, with a `hidden' $s\in \F_q$, 
we denote  $\Oes=\Oef $.  We remark that in this case 
there are two naive algorithms that work for linear polynomials:
\begin{itemize}

\item One can query $\Oes$ at $e+1$ arbitrary points and then 
using a  fast interpolation
algorithm, see~\cite{vzGG}, obtain a deterministic algorithm of
complexity $e (\log q)^{O(1)}$
(as in~\cite{vzGG}, we measure the complexity of
an algorithm by the number of bit operations in the standard RAM model of computation).

\item For probabilistic testing one can  query $\Oes$ (and $\Oet$) at randomly chosen elements $x \in \F_q$ until the desired level of confidence is achieved (note that the equation $(x+s)^e = (x+t)^e$ has at most $e$ solutions $x \in \F_q$). 

\end{itemize}

These naive algorithms have been improved by Bourgain, Garaev, Konyagin and Shparlinski~\cite{BGKS} in several cases (with respect to both the time complexity and the number of queries). 

Furthermore,  
for linear polynomials  $f(X) = X+s$, Dam,  Hallgren and  Ip~\cite{vDamHallgIp}  provide a quantum 
polynomial time algorithm to find $s$,  see also~\cite{vDam}, in the case of the oracle oracles  $\Oef$, with $e = (q-1)/2$ (and odd $q$).  We remark that querring  this oracle
is equivalent to asking for the quadratic character of the computed value. Thus the oracle returns 
only one bit of information, making this the hardest case.

Russell and 
Shparlinski~\cite{RusShp} have initiated the study of this question
for non-linear monic polynomials and,  in the case of the oracle  $\fO_{(p-1)/2,f}$, 
for a prime $q=p$,
have designed several  classical  and quantum algorithms. 
More recently, several other algorithms, for an arbitrary $e$,  have
been given in~\cite{IKSSS}.  The algorithms from~\cite{IKSSS} usually 
improve on the above trivial  interpolation 
and random sampling algorithms. However in the settings of~\cite{IKSSS}
the degree of the polynomials is assumed to be small.

Here we  concentrate on the case of polynomials large degree
and fields of large characteristic $p$, in particular, on the case of prime 
fields $\F_p$, and use 
a different approach to obtain new results in this case. 

We also observe that  if 
$$
e \max\{\deg f, \deg g\}> q
$$ 
then the above naive  interpolation and random sampling algorithms both fail. 
Indeed, since queries from an extension field are not permitted,
and $\F_q$ may not have enough elements to make these algorithms  work.
This indicates 
that the difficulty of the problem grows with  the degrees of polynomials 
involved.

Our approach is based on a new upper bound on the size of an 
intersection of value set on consecutive integers of a rational function of large degree with a small subgroup 
of a finite field. Results of this type, complementing this of~\cite{GomShp,IKSSS,Shp2}
are of independent interest, see 
also ~\cite{BGKS,Chang,CCGHSZ,CGOS,CSZ, Ost, Shp1} for further 
results on related problems.

\section{Main results}
\label{sec:main res}

First we consider the identity testing case of two unknown {\em monic} 
polynomials $f,g\in \F_q[X]$ of degree $d$ given 
the oracles $\Oef$ and $\Oeg$. We remark that if $f/g$ is
an $(q-1)/e$-th power of a 
 rational function 
over $\F_q$ then  it is impossible to distinguish 
between $f$ and $g$ from the oracles $\Oef$ and $\Oeg$.

We however impose a slightly stronger condition that $f/g$ is not a perfect power
 of a  rational function.

\begin{definition}
\label{def:Large_e} 
We see say that a rational function $\psi \in \F_p(X)$ is a nontrivial perfect power 
if $\psi = \varphi^k$ for some rational function $\varphi \in \F_p(X)$ of positive degree 
and some positive integer $k$. 
\end{definition}

\begin{theorem}
\label{thm:Large_e}  There are absolute constants $c_1, c_2> 0$ such that
for a prime $p$ and a positive integer
$e\mid p-1$, given two oracles $\Oef$ and $\Oeg$  for some 
unknown monic polynomials $f,g\in \F_p[X]$ of degree $d$ such 
that 
\begin{equation}
\label{eq:ed cond}
e \le c_1 \min\left\{ p d^{-3/2} ,   p^{3/2}d^{-7/2}\right\}
\end{equation}
such that $f/g$ is a nontrivial perfect power, 
there is a deterministic algorithm to decide 
whether $f = g$ in at most 
$$c_2 \max\left\{d^{3} e^2 p^{-1} ,   d^{7/3}e^{2/3}\right\}
$$
 queries 
to  the oracles $\Oef$ and $\Oeg$. 
\end{theorem}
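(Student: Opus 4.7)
The plan is to query both oracles at the first $T$ consecutive positive integers, and to declare $f=g$ if and only if every pair of outputs coincides, where
\[
T \;=\; \bigl\lceil c_{2}\max\{d^{3}e^{2}p^{-1},\, d^{7/3}e^{2/3}\}\bigr\rceil.
\]
This uses $2T$ oracle calls plus $T(\log p)^{O(1)}$ bit operations. If $f=g$ the two oracles trivially agree everywhere, so the only thing to check is that whenever $f\neq g$ the outputs must differ at some $x\in\{1,\ldots,T\}$.

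Assume for contradiction that $f(x)^{e}=g(x)^{e}$ for every $x\in\{1,\ldots,T\}$, and set $\psi=f/g\in\F_{p}(X)$. For each such $x$, either $f(x)=g(x)=0$, which can happen for at most $d$ values of $x$, or $\psi(x)$ lies in the subgroup $H=\{y\in\F_{p}^{*}:y^{e}=1\}$ of order dividing $e$. Writing
\[
N_{\psi}(H;T)\;=\;\#\bigl\{x\in\{1,\ldots,T\}:\psi(x)\in H\bigr\},
\]
the contradictory assumption forces $N_{\psi}(H;T)\ge T-d$. Since $f$ and $g$ are monic of the same degree $d$ and $f\neq g$, the rational function $\psi$ is nonconstant of degree at most $d$. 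The standing assumption that $f/g$ is not a nontrivial perfect power means that $\psi$ is not of the form $\varphi^{k}$ for any $\varphi\in\F_{p}(X)$ of positive degree and any integer $k\ge 2$, placing $\psi$ precisely in the class of rational functions to which the paper's new subgroup bound applies.

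The crux of the argument is then to invoke that bound on $N_{\psi}(H;T)$ — the announced estimate for the number of values of a rational function of large degree on a short block of consecutive integers that fall into a small multiplicative subgroup of $\F_{p}^{*}$. The bound is expected to exhibit two regimes that reproduce the two terms inside the maximum defining $T$; checking that the hypothesis~\eqref{eq:ed cond} combined with the above choice of $T$ pushes $N_{\psi}(H;T)$ strictly below $T-d$ is a direct optimisation and yields the desired contradiction.

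The main obstacle is therefore not algorithmic but analytic: establishing the underlying subgroup bound for rational functions that are not nontrivial perfect powers. This is the substantive new content of the paper, presumably combining Weil-type character sum estimates on short intervals with additive-combinatorial tools tailored to small subgroups of $\F_{p}^{*}$. Once that bound is in hand, the algorithmic wrapper — querying on a consecutive-integer interval, discarding the at most $d$ common zeros of $f$ and $g$, and balancing the two regimes in the choice of $T$ — is routine bookkeeping, and the claimed query complexity drops out.
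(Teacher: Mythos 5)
Your proposal matches the paper's own proof of the theorem: both query the oracles at consecutive integers $1,\dots,H$ with $H \asymp \max\{d^3e^2p^{-1},\,d^{7/3}e^{2/3}\}$, observe that agreement of the oracles forces $\psi=f/g$ to land in the order-$e$ subgroup for all but $O(d)$ queried points, and then invoke the paper's subgroup-intersection bound (Lemma~\ref{lem:fI G}) to derive a contradiction; you also correctly read the theorem's hypothesis as ``$f/g$ is \emph{not} a nontrivial perfect power,'' fixing an evident typo in the statement. One small correction to your closing speculation: the underlying bound is not obtained from Weil-type character sums on short intervals, but from the Corvaja--Zannier estimate on points of a plane curve with both coordinates in multiplicative subgroups, applied to resultants $R_a(U,V)=\mathrm{Res}_X(f(X)-Ug(X),\,f(X+a)-Vg(X+a))$ after showing these are generically not divisible by torsion polynomials.
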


Next, we consider the interpolation problem for square-free polynomials.
\begin{theorem}\label{thm:interpol}
For any fixed $\varepsilon>0$ there are constants $c_1,c_2,c_3>0$ such that for a prime $p$ and a positive integer $e\mid p-1$, $e\geq c_1$,  given an oracle $\Oef$ for some unknown monic square-free polynomial $f\in \F_p[X]$ of degree $d\geq 2$ such that
$$
e \le c_2 \min\left\{ p d^{-3/2} ,   p^{3/2}d^{-7/2},p^{1-\varepsilon}\right\},
$$
there is a deterministic algorithm which makes at most 
$$c_3\max\{d^3e^2p^{-1},d^{7/3}e^{2/3}\}
$$ 
queries to  the oracle $\Oef$ and recovers the polynomial $f$ in time 
$$
e^{(1+\varepsilon)d}\max\{e^{3/2}p^{-1+\varepsilon},  e^{1/6}p^{\varepsilon}\}.
$$
\end{theorem}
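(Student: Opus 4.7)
The plan is to enumerate all $e^d$ candidate ``$e$-th-root lifts'' of the data $\Oef(1),\ldots,\Oef(d)$ and then sift them by a short verification test, whose correctness will be powered by the additive-combinatorics estimate that underlies Theorem~\ref{thm:Large_e}. The squarefreeness of $f$ is what lets the sifting step work in a setting which is in fact complementary to, rather than covered by, Theorem~\ref{thm:Large_e} itself.

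Concretely, I would query $\Oef$ once at each of the integers $1,2,\ldots,T$ with $T:=c_3\max\{d^{3}e^{2}p^{-1},\,d^{7/3}e^{2/3}\}$, recording $y_x:=f(x)^e$. Since $e\mid p-1$, one may compute in time $(\log p)^{O(1)}$ a single $e$-th root $v_i\in\F_p$ of $y_i$ for $i=1,\ldots,d$ together with a generator $\zeta$ of the order-$e$ subgroup $\mu_e\le\F_p^{*}$. For every tuple $(j_1,\ldots,j_d)\in\{0,1,\ldots,e-1\}^{d}$, Lagrange interpolation returns the unique monic $g\in\F_p[X]$ of degree $d$ with $g(i)=\zeta^{j_i}v_i$; this produces a list $\cG$ of $e^d$ candidates, one of which is the unknown $f$. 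For each $g\in\cG$ I then test whether $g(x)^e=y_x$ for every $x\in\{1,\ldots,T\}$ and output the unique $g$ that passes. The oracle is queried exactly $T$ times, matching the stated query count, while the arithmetic work is $e^d\cdot T\cdot d^{O(1)}(\log p)^{O(1)}$, which fits inside $e^{(1+\varepsilon)d}\max\{e^{3/2}p^{-1+\varepsilon},\,e^{1/6}p^{\varepsilon}\}$ once the hypothesis $e\ge c_1$ is used to absorb the polynomial overhead into $e^{\varepsilon d}$ and $p^{\varepsilon}$.

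For correctness, suppose $g\in\cG$ with $g\ne f$. Writing $f/g=A/B$ in reduced form with $A,B$ monic and coprime, the factor $A$ divides the squarefree polynomial $f$ and is therefore itself squarefree. If one had $f/g=\varphi^{k}$ for some $k\ge 2$ and $\varphi\in\F_p(X)$ of positive degree, matching reduced forms forces $A=P^{k}$ and $B=Q^{k}$ for monic coprime $P,Q$; squarefreeness of $A=P^k$ then forces $\deg P=0$, whence $A=1$, $f=\gcd(f,g)\mid g$, and $f=g$, contradicting $g\ne f$. Hence $f/g$ is never a nontrivial perfect power for any wrong candidate, and the rational-function / short-interval / small-subgroup bound proved in this paper (the engine behind Theorem~\ref{thm:Large_e}) applies to $\psi=f/g$ with subgroup $H=\mu_e$, yielding
\[
\#\{x\in[1,T]:f(x)^{e}=g(x)^{e}\}\;=\;\#\{x\in[1,T]:\psi(x)\in\mu_e\}\;<\;T
\]
under the hypotheses $e\le c_2\min\{pd^{-3/2},\,p^{3/2}d^{-7/2},\,p^{1-\varepsilon}\}$. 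So some $x\in[1,T]$ witnesses $g(x)^{e}\ne y_x$, and $g$ is rejected.

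The main obstacle I anticipate is quantitative rather than structural: one must check that the additive-combinatorics bound really does stay strictly below $T$ in this ``non-perfect-power'' regime, including the case when $\deg(f/g)$ in lowest terms is much smaller than $d$ (which can happen when $\gcd(f,g)$ is large). Balancing this is exactly what the triple condition on $e$ is doing --- the two bounds involving $pd^{-3/2}$ and $p^{3/2}d^{-7/2}$ control the two regimes of $T$ appearing in the estimate, while $e\le c_2 p^{1-\varepsilon}$ guarantees that polynomial losses in the complexity bookkeeping are absorbed into $p^{\varepsilon}$. The remaining ingredients (Lagrange interpolation of candidates, an optional squarefree pre-filtering of $\cG$, and the verification loop) are routine.
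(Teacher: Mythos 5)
Your soundness argument --- that for squarefree $f$ and any candidate $g\ne f$ the reduced fraction $f/g$ can never be a nontrivial perfect power --- is correct and is precisely the reason the theorem restricts to squarefree $f$; Lemma~\ref{lem:fI G} then guarantees a rejecting query among $x=1,\dots,T$. The query count is also right. The genuine gap is quantitative and lies in the running time: you enumerate all $e^{d}$ candidates and test each one against all $T=c_3\max\{d^{3}e^{2}p^{-1},\,d^{7/3}e^{2/3}\}$ oracle values, giving work $\gg e^{d}\,T$. This does not fit inside the stated bound $e^{(1+\varepsilon)d}\max\{e^{3/2}p^{-1+\varepsilon},\,e^{1/6}p^{\varepsilon}\}$. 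In the regime where $d^{3}e^{2}p^{-1}$ dominates, you would need $d^{3}e^{1/2}\ll e^{\varepsilon d}p^{\varepsilon}$; since the hypothesis only gives $e\le c_2 p^{1-\varepsilon}$, for small $\varepsilon$ (say $\varepsilon<1/3$) one has $e^{1/2}/p^{\varepsilon}\gg p^{(1-3\varepsilon)/2}\to\infty$. The same overrun by roughly $e^{1/2}$ occurs in the other regime. So the proposal is off by a factor of essentially $e^{1/2-\varepsilon}$, which is exactly the nontrivial content of the theorem.

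The missing ingredient is Lemma~\ref{lemma:bound_for_poly}, built on the Vyugin--Shkredov bound~\cite[Lemma~4.1]{VS12} for intersections of shifted multiplicative subgroups: given the oracle values at only $O(d)$ points, at most $e^{d-1/2+\varepsilon}$ monic degree-$d$ polynomials can be consistent with them, not $e^{d}$. The paper's Step~3 first filters the candidate list with this constant-size query set, leaving $\le e^{d-1/2+\varepsilon}$ survivors, and only then runs the $T$-query verification on those, giving the dominant cost $e^{d-1/2+\varepsilon}\,T$ which matches the statement after absorbing $d^{O(1)}$ and $(\log p)^{O(1)}$ into $e^{\varepsilon d}p^{\varepsilon}$. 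Without this combinatorial saving, the brute-force bound cannot be reached, and it is not a ``routine'' optimization. A secondary but non-cosmetic point: you assert one can compute, deterministically and in $(\log p)^{O(1)}$ time, both a single $e$-th root $v_i$ of $y_i$ and a generator $\zeta$ of the order-$e$ subgroup; neither is known deterministically in that time. The paper addresses deterministic root extraction through Lemma~\ref{lemma:powers} (cost $e(\log p)^{O(1)}$, and only under a divisibility condition on $\ind x$), together with the pigeonhole construction in Step~1 to find pairs $x_i,x_i+h$ where that condition is met; this is not the dominant term, but it is needed to make the algorithm genuinely deterministic.
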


\section{Points with coordinates from subgroups on plane curves over $\F_p$}

We recall that the
notations $U = O(V)$, $U \ll V$ and  $V \gg U$ are all equivalent to the
statement that the inequality $|U| \le c V$ holds with some
constant $c> 0$. Throughout the paper, any implied constants in these symbols are absolute, in particular, all estimates are uniform with respect to 
the degree $d$, the exponent $e$ and the field characteristic $p$. 

Our argument relies on a result of Corvaja and Zannier~\cite[Corollary~2]{CorZan}.

We also use $\Fbar_p$ for the algebraic closure of $\F_p$

We also write $\deg_X F(X,Y)$ and $\deg_Y F(X,Y)$ for the degree of $F(X,Y)\in \Fbar_p[X,Y]$ in $X$ and $Y$ 
respectively, and reserve $\deg F(X,Y)$ for the total degree. 

We say that $F(X,Y)\in \Fbar_p[X,Y]$ is a {\it torsion polynomial\/}
it is of the form  
\begin{equation}
\label{eq:TorsPoly}
\alpha U^mV^n + \beta \qquad  \text{or}\qquad \alpha U^m + \beta V^n. 
\end{equation}

We remark that as we work over the  algebraically closed field $\Fbar_p$, 
the notions of irreducibility and absolute irreducibility coincide. 

\begin{lemma}
\label{lem:CZ-Irred}
Assume that   $F(X,Y)\in \Fbar_p[X,Y]$ is an
irreducible polynomial with $\deg F = d$
which  is not a 
torsion polynomial. For any multiplicative 
subgroups $\cG, \cH \subseteq  \Fbar_p^*$,  
we have
$$
 \# \left\{(u,v) \in \cG\times \cH:~ F(u,v) = 0\right\} \ll \max\left\{  \frac{d^2W}{p}, \,    d^{4/3}W^{1/3}\right\} , 
$$
where
$$
W = \#\cG \#\cH.
$$
\end{lemma}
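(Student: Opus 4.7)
The plan is to derive this bound as essentially a direct reformulation of Corollary~2 of Corvaja and Zannier~\cite{CorZan}, which is cited as the key input. That result gives, for an absolutely irreducible curve $F(X,Y)=0$ over $\Fbar_p$ avoiding certain exceptional ``torsion'' shapes, an upper bound on the number of its $\cG\times\cH$-points of precisely the two-term form appearing in the conclusion.

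Concretely, I would proceed as follows. First, I would check that the hypotheses of Corvaja--Zannier apply to our $F$: irreducibility of $F$ over $\Fbar_p$ coincides with absolute irreducibility (as the paper already notes), and the assumption that $F$ is not a torsion polynomial in the sense of~\eqref{eq:TorsPoly} rules out exactly the excluded shapes $\alpha U^m V^n+\beta$ and $\alpha U^m+\beta V^n$ whose zero sets are unions of cosets of subgroups and for which no nontrivial bound can hold. Second, I would note that any finite subgroup of $\Fbar_p^*$ is cyclic and sits inside some $\F_{p^k}^*$, so one may pass to a common extension containing both $\cG$ and $\cH$ without affecting the degree of $F$ or the count; the Corvaja--Zannier argument is insensitive to the ambient field beyond this.

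Third, I would set $W=\#\cG\#\cH$ and invoke the Corvaja--Zannier estimate to get a bound of the form $d^2 W/p + d^{4/3} W^{1/3}$, and then observe that the two regimes are controlled by comparing $W$ with $p^{3/2}/d$: when $W\gg p^{3/2}/d$ the first term dominates (and eventually recovers the trivial bound $W$ once $W>p$), while for smaller $W$ the second term does. Replacing the sum by the maximum costs only an absolute factor of $2$, yielding the stated form $\max\{d^2 W/p,\,d^{4/3}W^{1/3}\}$.

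The main (and essentially only) obstacle is the bookkeeping needed to check that Corollary~2 of~\cite{CorZan} is stated in, or extends to, exactly the generality we need, in particular with subgroups of $\Fbar_p^*$ rather than just $\F_p^*$, and with the torsion hypothesis matching our Definition via~\eqref{eq:TorsPoly}. Once this is in place, there is no further combinatorial or algebraic work required: the lemma is a packaging of~\cite{CorZan} in the notation convenient for the subsequent sections.
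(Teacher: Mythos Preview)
Your approach is the right one---this lemma is indeed a repackaging of~\cite[Corollary~2]{CorZan}---but you have skipped the one substantive step that the paper's proof actually contains. Corollary~2 of Corvaja--Zannier is not stated with the degree $d$ as the governing parameter; it is stated in terms of the Euler characteristic $\chi$ of the affine curve $F(X,Y)=0$ (equivalently, the genus $g$ of its smooth projective model together with the cardinality of the puncture set $\cS$ consisting of the zeros and poles of the coordinate functions $X$ and $Y$). The bound one reads off has the shape
\[
\max\left\{\frac{|\chi|\,W}{p},\ \bigl(|\chi|^2 W\bigr)^{1/3}\right\},
\]
not $\max\{d^2W/p,\,d^{4/3}W^{1/3}\}$. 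To get the stated dependence on $d$ you must supply the estimates $g\ll d^2$ (genus--degree inequality for plane curves) and $\#\cS\ll d$ (each coordinate function has at most $d$ zeros and $d$ poles on a degree-$d$ plane curve), whence $|\chi|=|2-2g-\#\cS|\ll d^2$. This is exactly the content of the paper's short proof, and it is what produces the exponents $2$ and $4/3$ on $d$ in the two terms.

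Your remaining remarks---that absolute irreducibility is automatic over $\Fbar_p$, that the excluded torsion shapes match~\eqref{eq:TorsPoly}, and that a sum of two terms can be replaced by their maximum up to a factor of $2$---are fine. The point about passing to a common finite extension containing $\cG$ and $\cH$ is harmless but not really needed once you are working over $\Fbar_p$. What you should add is the paragraph bounding $g$ and $\#\cS$; without it, your ``bookkeeping'' clause is hiding the entire argument.
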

\begin{proof}To apply~\cite[Corollary~2]{CorZan} we need to estimate the Euler characteristic $\chi$
of the curve $F(X,Y)=0$ in terms of $d$. For the genus $g$ we have the well-known 
estimate $g \ll d^2$. The set $\cS$ from~\cite[Theorem~2]{CorZan}  corresponding to the 
scenario of~\cite[Corollary~2]{CorZan} is the set of poles and zeros of coordinate
functions $X$ and $Y$ and thuse $\# \cS\ll d$. The result now follows. 
\end{proof}

We now need a version of Lemma~\ref{lem:CZ-Irred} which applies to any curve.

\begin{lemma}
\label{lem:CZ-Arb}
Assume that   $F(X,Y)\in \Fbar_p[X,Y]$ is of degree $\deg F=d$ and is not divisible by a torsion polynomial. For any mutltiplicative 
subgroups $\cG, \cH \subseteq  \Fbar_p^*$, we have
$$
\# \left\{(u,v) \in \cG\times \cH:~ F(u,v) = 0\right\}
\ll \max\left\{ \frac{d^2 W}{p}, \,   d^{4/3}W^{1/3}\right\} , 
$$ 
where
$$
W = \#\cG \#\cH.
$$
\end{lemma}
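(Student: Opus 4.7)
The plan is to reduce the statement to the irreducible case by factoring $F$ over $\overline{\F}_p[X,Y]$ and then invoking Lemma~\ref{lem:CZ-Irred} on each irreducible factor. Since the hypothesis asks only that $F$ not be divisible by any torsion polynomial, the crucial—and essentially only—observation is that this condition is inherited by every irreducible factor of $F$: if some irreducible factor $F_i$ were itself a torsion polynomial, then $F_i \mid F$ would exhibit $F$ as divisible by a torsion polynomial, contradicting the hypothesis.

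First I would write $F = F_1^{e_1} \cdots F_k^{e_k}$ with pairwise distinct irreducible $F_i \in \overline{\F}_p[X,Y]$ and set $d_i = \deg F_i$, so that $d_i \ge 1$ and $\sum_i d_i \le d$. Since the zero set of $F$ is the union of the zero sets of the $F_i$,
$$
\#\left\{(u,v) \in \cG \times \cH : F(u,v)=0\right\} \le \sum_{i=1}^{k} \#\left\{(u,v) \in \cG \times \cH : F_i(u,v)=0\right\}.
$$
By the observation above, each $F_i$ is irreducible and not a torsion polynomial, so Lemma~\ref{lem:CZ-Irred} applies and gives
$$
\#\left\{(u,v) \in \cG \times \cH : F_i(u,v)=0\right\} \ll \max\left\{ \frac{d_i^2 W}{p},\, d_i^{4/3} W^{1/3}\right\}.
$$

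Next I would sum these bounds, using the elementary estimates
$$
\sum_{i=1}^{k} d_i^2 \le d \sum_{i=1}^{k} d_i \le d^2 \mand \sum_{i=1}^{k} d_i^{4/3} \le d^{1/3} \sum_{i=1}^{k} d_i \le d^{4/3},
$$
which follow from $d_i \le d$ and $\sum d_i \le d$. Bounding $\max\{A,B\} \le A+B$ within each summand and then applying the two inequalities above yields the desired overall bound $\ll \max\{d^2 W/p,\ d^{4/3}W^{1/3}\}$.

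There is no real obstacle; the content of the lemma lies entirely in Lemma~\ref{lem:CZ-Irred}, and the present statement is a purely formal extension. The only point requiring a moment's care is the torsion-polynomial bookkeeping described in the first paragraph, together with verifying that the factorization is carried out over $\overline{\F}_p[X,Y]$ (so that irreducibility matches the hypothesis of Lemma~\ref{lem:CZ-Irred}) rather than over $\F_p[X,Y]$.
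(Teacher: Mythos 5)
Your proof is correct and takes essentially the same route as the paper: factor $F$ into irreducible components over $\overline{\F}_p$, note that no factor can be a torsion polynomial by the divisibility hypothesis, apply Lemma~\ref{lem:CZ-Irred} to each factor, and sum the degree-dependent quantities using $\sum d_i^2 \le d^2$ and $\sum d_i^{4/3} \le d^{4/3}$ (the paper calls this ``the convexity argument,'' and your elementary derivation via $d_i \le d$ is a valid way to get the same bounds). Your explicit remark that the non-torsion hypothesis is inherited by each irreducible factor is a detail the paper leaves implicit but is exactly the right observation.
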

 
\begin{proof} If $F$ is an irreducible polynomial then  the bound is immediate frow Lemma~\ref{lem:CZ-Irred}.
Otherwise we factor $F$ into irreducible (over $\Fbar_p$ components of degrees, say, $d_1, \ldots, d_s$ and then we obtain 
$$
\# \left\{(u,v) \in \cG\times \cH:~ F(u,v) = 0\right\}
\ll \max\left\{  \frac{ A W}{p}, \,  B W^{1/3}\right\} , 
$$ 
where, by the convexity argument, 
$$
A = \sum_{i=1}^s d_i^2\le  d^2\mand B =   \sum_{i=1}^s d_i^{4/3}\le   d^{4/3}, 
$$ 
which concludes the proof. 
\end{proof}

\section{Non-vanishing of some resultants}

We recall the following well-known statement, see for example~\cite[Lemma~6.54]{LN} (the 
proof extends from polynomials to rational functions without any changes). 

\begin{lemma}
\label{lem:AbsIrred}
Let $f(X), g(X) \in \Fbar_p[X]$ be such that $f/g$ is not a perfect power of a  rational function.
Then for any integer $m \ge 1$ the polynomial $f(X) - Y^mg(X)$ 
is irreducible.
\end{lemma}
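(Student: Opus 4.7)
My plan is to reduce the statement to showing that $Y^m - h(X)$ is irreducible as a polynomial in $Y$ over the field $\Fbar_p(X)$, where $h := f/g$ is taken in lowest terms (so $\gcd(f,g) = 1$, which does not change the hypothesis since $f/g$ as a rational function is the same). Since $-g(X)$ is a unit in $\Fbar_p(X)$, the factorization $f(X) - Y^m g(X) = -g(X)\bigl(Y^m - h(X)\bigr)$ in $\Fbar_p(X)[Y]$ shows the two polynomials have the same irreducibility status there. Since $F := f - Y^m g$ has content $\gcd(f,g) = 1$ as an element of $\Fbar_p[X][Y]$, Gauss's lemma transfers irreducibility in $\Fbar_p(X)[Y]$ back to irreducibility in $\Fbar_p[X,Y]$.

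For the irreducibility of $Y^m - h$ over $\Fbar_p(X)$, I would invoke the classical Vahlen--Capelli criterion (the version of~\cite[Lemma~6.54]{LN} applied in $\Fbar_p(X)$ in place of $\F_q$): for any field $L$ and any $a \in L^*$, the binomial $Y^m - a$ is irreducible in $L[Y]$ iff (i) $a$ is not an $\ell$-th power in $L$ for any prime $\ell \mid m$, and (ii) when $4 \mid m$, $a \notin -4 L^4$. Condition (i) is precisely the hypothesis that $h = f/g$ is not a nontrivial perfect power of any rational function in $\Fbar_p(X)$.

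The main subtlety is condition (ii). Here the key observation is that $\Fbar_p$ always contains a square root of $-1$ when $p$ is odd, so $-4 = (2\sqrt{-1})^2$ is a square in $\Fbar_p(X)$; consequently, any element of $-4\,\Fbar_p(X)^4$ is already a square, so $h \in -4\,\Fbar_p(X)^4$ would contradict (i) with the prime $\ell = 2$ (which divides $m$ because $4 \mid m$). When $p = 2$ one simply notes that $-4 = 0$ and $h \neq 0$, so the condition is vacuous. With (i) and (ii) in hand, Vahlen--Capelli gives irreducibility of $Y^m - h$ over $\Fbar_p(X)$, and the reduction above completes the proof. The only nontrivial step is the verification of (ii); everything else is essentially bookkeeping via Gauss's lemma and the classical criterion.
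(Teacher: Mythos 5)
Your proof is correct and follows essentially the same route as the paper, which simply invokes \cite[Lemma~6.54]{LN} (the binomial-irreducibility criterion) with the remark that it extends from polynomials to rational functions. You supply the details the paper leaves implicit: passing to $Y^m - h$ over $\Fbar_p(X)$ via Gauss's lemma, and checking that the $-4K^4$ exception in Vahlen--Capelli is vacuous here because $\Fbar_p$ contains a square root of $-1$, so any element of $-4\,\Fbar_p(X)^4$ is already a square and is excluded by the $\ell = 2$ instance of condition (i).
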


\begin{lemma}
\label{lem:Syst Eq}
Let $f(X), g(X) \in \Fbar_p[X]$ be polynomials of degrees at most $d$ and 
be such that $f/g$ is not a perfect power of a  rational function. 
Then for any integers $m,n$ the   system of equations 
$$f(X) - Y^mg(X) =  f(X+a) -  bY^ng(X+a)=0,
$$ with $a,b \in \Fbar_p^*$ 
defines a zero dimensional variety, unless  
$m = \pm n$,  and pairs $(a,b)$ are from a set of cardinality at most $4$.
\end{lemma}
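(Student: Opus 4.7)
The plan is to reduce the zero-dimensionality question to whether the two defining polynomials share a common irreducible factor in $\Fbar_p[X,Y]$. Write $F_1(X,Y)=f(X)-Y^mg(X)$ and $F_2(X,Y)=f(X+a)-bY^ng(X+a)$, first clearing denominators in $Y$ if some exponent is negative (this does not affect the variety away from the excluded locus $Y=0$). Then $V(F_1,F_2)$ is zero-dimensional precisely when $F_1$ and $F_2$ have no common irreducible factor.

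I would first show that both $F_1$ and $F_2$ are irreducible in $\Fbar_p[X,Y]$. For $F_1$, this is Lemma~\ref{lem:AbsIrred} applied directly to $(f,g)$. For $F_2$, the property that $f/g$ is not a perfect power of a rational function is preserved under the translation $X\mapsto X+a$, and the scalar $b$ can be absorbed by substituting $Y\mapsto\beta^{-1}Y$ with $\beta^n=b$, which exists in $\Fbar_p^*$ by algebraic closure; Lemma~\ref{lem:AbsIrred} then yields irreducibility of $F_2$ as well. Consequently $V(F_1,F_2)$ has positive dimension if and only if $F_1=c\,F_2$ for some $c\in\Fbar_p^*$.

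I would then match $F_1$ and $cF_2$ as polynomials in $Y$ over $\Fbar_p[X]$. Each has exactly two nonzero $Y$-monomials, so the sets of $Y$-exponents occurring must coincide, forcing $m=\pm n$ and handling the case $m\ne\pm n$ for every $(a,b)$. When $m=n$, coefficient matching gives $f(X)=c\,f(X+a)$ and $g(X)=cb\,g(X+a)$; comparing leading coefficients (which are translation-invariant) forces $c=b=1$, so $a$ must be a common additive period of both $f$ and $g$. When $m=-n$, an analogous analysis (after $Y\mapsto Y^{-1}$ and clearing) gives $g(X)=c_0\,f(X+a)$ and $f(X)=c_0'\,g(X+a)$ for some $c_0,c_0'\in\Fbar_p^*$, pinning down $a$ up to the involution $a\leftrightarrow -a$ and then determining $b$.

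The main obstacle will be the final cardinality bound. The key rigidity comes from the non-perfect-power hypothesis: two independent common periods of $f$ and $g$ would generate an additive subgroup forcing $f/g$ to factor through the corresponding quotient as a nontrivial power, contradicting the hypothesis (and the established irreducibility of $F_1$). Each of the subcases $m=n$ and $m=-n$ therefore contributes at most two admissible pairs $(a,b)$, related by the $a\leftrightarrow -a$ symmetry, giving the overall bound of four.
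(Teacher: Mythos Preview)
Your approach is the paper's: both curves are irreducible by Lemma~\ref{lem:AbsIrred}, so a common component forces $F_1=cF_2$, hence $m=\pm n$, and then one compares $Y$-coefficients to constrain $(a,b)$. You are in fact more careful than the paper at the last step: where the paper, in the case $m=n$, simply writes ``hence $b=1$ and then $a=0$'', you correctly observe that what one really obtains is $b=1$ together with $f(X)=f(X+a)$ and $g(X)=g(X+a)$, i.e.\ $a$ must be a common additive period of $f$ and $g$.

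The genuine gap is in your cardinality argument. A shared additive period group $G$ only gives $f/g=(\tilde f/\tilde g)\circ\pi_G$ with $\pi_G(X)=\prod_{\gamma\in G}(X-\gamma)$; composition with a linearized polynomial has nothing to do with being a $k$-th power, so the non-perfect-power hypothesis yields no contradiction. Concretely, for $p\ge 5$ take $f=(X^p-X)^2+1$ and $g=(X^p-X)^3+1$: then $f/g$ has only simple zeros and poles, hence is not a perfect power, yet every $a\in\F_p^*$ is a common period, producing $p-1>4$ bad pairs $(a,1)$. What actually justifies the paper's assertion ``$a=0$'' is the implicit restriction $d<p$ (which holds in all the applications via~\eqref{eq:ed cond}): a nonconstant polynomial over $\Fbar_p$ with a nonzero period $a$ is invariant under the whole line $\F_p\cdot a$, so its degree must be a multiple of $p$. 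Under $d<p$ both the $m=n$ and $m=-n$ analyses pin down $a$ uniquely, and your period discussion and $a\leftrightarrow -a$ symmetry are not needed.
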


\begin{proof} If $mn=0$ the result is trivial

Changing the roles of $f$ and $g$, we can always assume that $m > 0$. 

Now, let $n > 0$. Since by Lemma~\ref{lem:AbsIrred} both polynomials are 
 irreducible, they may have a common factor if and only if 
they are equal up to a  factor from  $\Fbar_p^*$ and thus $m = n$. 
Furthermore comparing the coefficient at the front of $Y$ we conclude that 
$$
b\(f(X) - Y^mg(X)\) = f(X+a) -  bY^ng(X+a).
$$
Hence $m = n$. Now, comparing the parts which do not depend on $Y$ we obtain $bf(X)  = f(X+a)$ hence $b = 1$ and then $a = 0$. 

We now consider the case $n < 0$ and rewrite the equations as
$$f(X) - Y^mg(X) =  Y^{-n} f(X+a) -  bg(X+a)=0. 
$$
Again, by Lemma~\ref{lem:AbsIrred}, the polynomials involved are  irreducible again, 
hence  $m = -n$.  Comparing the parts which do not depend on $Y$ we see that 
$b$ is uniquely defined and then $a$ is uniquely defined as well. 
\end{proof}

\begin{lemma}
\label{lem:Res_uv} 
Let $f(X), g(X) \in \Fbar_p[X]$ be  polynomials of degrees at most $d$ and 
be such that $f/g$ is not a perfect power of a  rational function.
Then there is a set $\cE \subseteq  \Fbar_p^*$ of cardinality  $\# \cE = O(d^2)$ 
such that for $a\in \Fbar_p^*\setminus \cE $ the resultant 
$$
R_a(U,V) = \Res_X\((f(X) - Ug(X), f(X+a) - Vg(X+a)\)
$$
with respect to $X$, 
is not divisible by a torsion polynomial.
\end{lemma}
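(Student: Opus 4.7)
The plan is to iterate over the possible exponent pairs $(m,n)$ of a torsion factor of $R_a(U,V)$ and, for each, to apply Lemma~\ref{lem:Syst Eq} to bound the number of ``bad'' values of $a$. Since $R_a$ is the $X$-resultant of two polynomials of degree at most $d$ in $X$ that are each linear in one of $U$, $V$, one has $\deg_U R_a,\deg_V R_a\le d$; so any torsion divisor of $R_a$ has both exponents at most $d$, and by passing to an irreducible factor I may assume $\gcd(m,n)=1$.

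Suppose first that $T(U,V)=U^mV^n-c$ (the Case A form of \eqref{eq:TorsPoly} after normalisation) divides $R_a$, and pick $\gamma\in\Fbar_p^*$ with $\gamma^n=c$. The irreducible curve $T=0$ admits the rational parametrisation $t\mapsto(t^n,\gamma t^{-m})$, so $T\mid R_a$ forces $R_a(t^n,\gamma t^{-m})=0$ identically in $t$. Unpacking this via the definition of the resultant, for each $t\in\Fbar_p^*$ there is $X_t$ with $f(X_t)=t^ng(X_t)$ and $f(X_t+a)=\gamma t^{-m}g(X_t+a)$, and hence the system
\[
f(X)-Y^n g(X)=0, \qquad f(X+a)-\gamma Y^{-m}g(X+a)=0
\]
has a positive-dimensional solution set in $(X,Y)$. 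Lemma~\ref{lem:Syst Eq}, applied with exponent pair $(n,-m)$ and parameter $b=\gamma$, then forces $n=\pm(-m)$ and locates $(a,\gamma)$ in a set of at most four pairs. Hence at most $O(1)$ bad values of $a$ arise from Case A torsion factors for each fixed exponent pair $(m,n)$.

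The Case B form $T(U,V)=U^m-cV^n$ I would handle symmetrically, using the parametrisation $(U,V)=(c^{1/m}t^n,t^m)$ together with a change of variable $Y\mapsto c^{-1/(mn)}Y$ to absorb the constant before invoking Lemma~\ref{lem:Syst Eq}, which again yields at most $O(1)$ bad values of $a$ per $(m,n)$. The degenerate possibilities with $m=0$ or $n=0$ would require $R_a$ to vanish along a line $V=V_0$ or $U=U_0$, which by the irreducibility of $f(X)-Ug(X)$ over $\Fbar_p(U)$ (from Lemma~\ref{lem:AbsIrred}) forces a proportionality that confines $a$ to an additional set of size $O(d)$. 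Taking $\cE$ to be the union of all these bad values of $a$ over $(m,n)\in\{1,\dots,d\}^2$ and over both cases then gives $\#\cE=O(d^2)$, and for $a\notin\cE$ the resultant $R_a$ has no torsion divisor. The main technical issue I anticipate is carrying out the rational parametrisations and substitutions cleanly enough that Lemma~\ref{lem:Syst Eq} applies in its stated form, without losing branches of the chosen roots or introducing extraneous components.
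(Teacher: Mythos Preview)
Your proposal is correct and follows essentially the same approach as the paper. The paper's proof is extremely terse: it simply asserts that a torsion divisor of $R_a$ produces a positive-dimensional variety of the type in Lemma~\ref{lem:Syst Eq}, and then counts the $O(d^2)$ exponent pairs $(m,n)$ with $m,n\le\deg_U R_a,\deg_V R_a=O(d)$, each contributing $O(1)$ bad values of $a$. Your parametrisation of the torsion curve and the substitution $(U,V)=(t^n,\gamma t^{-m})$ (and its Case~B analogue) is precisely the mechanism the paper hides behind the phrase ``we easily derive'', so you have unpacked rather than altered the argument.
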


\begin{proof} Assuming that 
$$
R_a(U,V) = \Res_X\((f(X) - Ug(X), f(X+a) - Vg(X+a)\)
$$
is divisible by a polynomial 
of the form~\eqref{eq:TorsPoly}, 
we easily derive that there is a variety of the type considered in 
Lemma~\ref{lem:Syst Eq}
which is of positive dimension. 
Since $m \le \deg_U R_a = O(d)$ and $n \le \deg_V R_a = O(d)$, 
the result follows.  
\end{proof}

\section{Intersection of polynomial images of intervals and subgroups}

For a rational function $\psi(X) = f(X)/g(X)\in\Fbar_p(X)$ with two relatively 
primes polynomials $f,g \in \Fbar_p[X]$
and a  set $\cS \subseteq \F_p$, we use $\psi(\cS)$ to 
denote   the value set 
$$
\psi(\cS) = \{\psi(x)~:~x \in \cS, \ g(x) \ne 0\} \subseteq \F_p.
$$
Given  an interval $\cI = [1,H]$ with a positive $H<p$ and a
subgroup $\cG \in \Fbar_p$ 
we consider the size of the intersection of $\psi(\cI)$ and $\cG$, 
that is, 
$$
N_\psi(\cI,\cG) =\#\(\psi(\cI) \cap \cG\).
$$
Bounds on this quantity for various functions $\psi$ is in the background 
of the algorithms of~\cite{BGKS, IKSSS}. These bounds  are also of independent interest as they are natural analogues of the problem
of bounding  
$$
N_\psi(\cI,\cJ) =\#\(\psi(\cI) \cap \cJ\), 
$$
for two intervals $\cI$ and $\cJ$ and similar sets,  which has recently been actively investigated, 
see~\cite{Chang,CCGHSZ,CGOS,CSZ, GomShp, Ost, Shp1,Shp2} and references therein.

\begin{lemma}
\label{lem:fI G}
Let  $\psi(X) = f(X)/g(X)\in\Fbar_p(X)$ with two relatively 
primes   polynomials $f,g \in \Fbar_p[X]$ of degrees at most $d$ 
and  such that $\psi$ is not a perfect power of a  rational function.
Then for  any interval $\cI = [1,H]$ of length  $H<p$ and any
subgroup $\cG \in \Fbar_p$ of order $e$, we have 
$$
N_\psi(\cI,\cG)   \ll H^{1/2}  \max\left\{d^{3/2} e p^{-1/2} ,   d^{7/6}e^{1/3}\right\} . 
$$
\end{lemma}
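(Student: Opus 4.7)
The plan is to estimate $N := N_\psi(\cI,\cG)$ by a second-moment argument combined with the point-counting Lemma~\ref{lem:CZ-Arb}. Counting the number of pairs $(x_1,x_2) \in \cI \times \cI$ with $\psi(x_1),\psi(x_2) \in \cG$ in two ways, and parametrising by the shift $a = x_2-x_1$, I would write
\[
N^2 \;=\; \sum_{|a|<H} T(a), \qquad T(a) \;:=\; \#\{x\in \cI : x+a\in \cI,\ \psi(x),\psi(x+a)\in \cG\}.
\]
Since $\psi$ is not a perfect power, it is in particular non-constant, so for every $u\in\Fbar_p$ the polynomial $f(X)-ug(X)$ is nonzero of degree at most $d$, and so has at most $d$ roots. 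This will let me turn bounds on certain curves into bounds on values of $x$.

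Next, for a fixed $a$, the observation is that $\psi(x)=u$ and $\psi(x+a)=v$ force $x$ to be a common root of $f(X)-ug(X)$ and $f(X+a)-vg(X+a)$, so the resultant $R_a(U,V)$ from Lemma~\ref{lem:Res_uv} vanishes at $(u,v)$; moreover each such $(u,v)$ has at most $d$ preimages $x$. By Lemma~\ref{lem:Res_uv}, outside an exceptional set $\cE$ of size $O(d^2)$ the polynomial $R_a$ is nonzero and is not divisible by a torsion polynomial, and one checks that $R_a$ has bidegree at most $(d,d)$, so $\deg R_a = O(d)$. Applying Lemma~\ref{lem:CZ-Arb} with $\cH=\cG$ and $W=e^2$ yields, for every $a\notin \cE\cup\{0\}$,
\[
T(a) \;\ll\; d\cdot \max\left\{\frac{d^2 e^2}{p},\ d^{4/3} e^{2/3}\right\} \;=\; \max\left\{\frac{d^3 e^2}{p},\ d^{7/3} e^{2/3}\right\}.
\]

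For the $O(d^2)$ exceptional shifts (together with $a=0$) the trivial bound $T(a)\le H$ contributes $O(d^2 H)$ in total. Summing and noting that $d^2\le d^{7/3}e^{2/3}$ for every $e\ge 1$, so the trivial contribution is absorbed into the main term,
\[
N^2 \;\ll\; H\,\max\left\{\frac{d^3 e^2}{p},\ d^{7/3} e^{2/3}\right\},
\]
and taking square roots gives the claimed bound.

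The main technical obstacle is the verification that the resultant $R_a$ is genuinely amenable to the Corvaja--Zannier machinery of Lemma~\ref{lem:CZ-Arb}: one must control $\deg R_a$, verify that $R_a\not\equiv 0$, and rule out torsion-polynomial factors. This is exactly the content of Lemmas~\ref{lem:AbsIrred}, \ref{lem:Syst Eq}, and~\ref{lem:Res_uv}, where the hypothesis that $\psi$ is not a perfect power is used in a crucial way through absolute irreducibility of $f(X)-Y^m g(X)$. Once that structural input is granted, the remainder is a routine Cauchy--Schwarz style second-moment counting argument with trivial handling of the exceptional shifts.
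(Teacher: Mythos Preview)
Your proof is correct and follows essentially the same strategy as the paper: a second-moment count over shifts $a$, elimination of $x$ via the resultant $R_a$, and an application of Lemma~\ref{lem:CZ-Arb} for the non-exceptional shifts, with the exceptional $O(d^2)$ shifts handled trivially. The only minor differences are that the paper uses a popularity argument to isolate a single good shift $a\in\cY\setminus\cE$ rather than summing over all shifts, and that your displayed equality $N^2=\sum_{|a|<H}T(a)$ should be the inequality $N^2\le\sum_{|a|<H}T(a)$ (since $N_\psi(\cI,\cG)$ counts values in $\cG$, not preimages $x\in\cI$), which does not affect the argument.
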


\begin{proof} 
Denote $M = N_\psi(\cI,\cG)$. Let $\overline \cI = [-H, H]$. Clearly the system of equations (over $\Fbar_p$):
$$
f(x) = u \mand  f(x+y) = v, \quad x\in \cI, \ y  \in \overline \cI, \ u,v \in \cG, 
$$
has at least $M^2$ solutions. 
Let $M_y$ be the number of solutions with a fixed $y$. 
The 
$$
\sum_{y  \in \overline \cI} M_y  \ge M^2.
$$
We choose 
$$
L = \frac{M^2}{2(2H+1)}
$$
 and consider the set $\cY$ of $y  \in \overline \cI$ with 
$M_y> L$. 
Using that $M_y \le H$ we write 
$$
  H \# \cY \ge   \sum_{\substack{y  \in \overline \cI\\ M_y  > L}} M_y 
\ge M^2 -   \sum_{\substack{y  \in \overline \cI\\ M_y \le  L}} M_y 
\ge M^2 - (2H+1) L \ge \frac{1}{2} M^2.
$$
Now if  $\# \cY \le\# \cE$, where $\cE$ is as in Lemma~\ref{lem:Res_uv}. 
then $M^2 \le 2H \# \cE  \ll d^2 H$ and thus $M \ll d H^{1/2}$, which 
is stronger than the desired bound.

Hence we can assume that  $\# \cY > \# \cE$ and thus  there is $ \cY \setminus \cE \ne \emptyset$.
We now fix any  $a \in \cY \setminus \cE$ and consider the systeof equations
$$
f(x) = u \mand  f(x+a) = v, \quad x\in \cI,   \ u,v \in \cG,
$$
Using the resultant to eliminate $x$ we obtain  $R_a(u,v)= 0$ for each solution, 
where $R_a(U,V)$ is as in  Lemma~\ref{lem:Res_uv}. Due to choice of $a$, we see that 
the bound of Lemma~\ref{lem:CZ-Arb} applies, and since for every fixed $u$ there are at most $d$ 
values of $x$ we obtain
\begin{align*}
L  & < M_a \le d \# \left\{(u,v) \in \cG\times \cG:~ R_a(u,v)= 0\right\}\\
& \ll d \max\left\{  \frac{d^2 e^2}{p}, \,    d^{4/3}e^{2/3}\right\}  .
\end{align*}
Recalling the definition of $L$ we obtain 
$$
M^2 \ll H  \max\left\{d^3 e^2 p^{-1} ,  d^{7/3}e^{2/3}\right\} 
$$
and the result follows.  
\end{proof}

\section{Finding solutions to binomial equations}

Consider the equation $x^e=A$ in $\F_p$ for some $A\in\F_p^*$. There exists a polynomial time (polynomial in $\log p$) probabilistic algorithm which finds all solutions of this equation, see, for example,\cite[Theorem~7.3.1]{BachShallit}.
 
If the equation $x^e=A$ satisfies some additional restriction one can derandomize \cite[Theorem~7.3.1]{BachShallit}. Namely, let $\ind x$ denote the index (discrete logarithm) of $x$ with respect to a fixed primitive root $g$ modulo $p$, that is the unique integer $z\in [1,p-1]$ with $x=g^z$. Combining \cite[Lemma~11]{BGKS} and \cite[Corollary~40]{BGKS} we derive:

\begin{lemma}\label{lemma:powers}
Let $\varepsilon>0$, then there exist constants $c_1,c_1>0$ with the following properties. For a prime $p$ and $e \mid p-1$ with $e\geq c_1$, there exist an integer $n\mid p-1$ with $n\leq c_2$ and a deterministic algorithm $\cA_n$ such that  for a given $A\in\F_p$ it finds all solutions of the equation $x^e=A$ satisfying $n \mid \ind x$ in time $e(\log p)^{O(1)}$. One can find $n$ and $\cA_n$ in time $e^{1/2+o(1)}+p^{\varepsilon}(\log p)^{O(1)}$.
\end{lemma}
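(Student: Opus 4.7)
The plan is to reduce the problem to two specific results from \cite{BGKS} and then combine them. The key observation is that the condition $n \mid \ind x$ is equivalent to $x$ lying in the unique subgroup $\cH_n \subseteq \F_p^*$ of index $n$, namely the subgroup of $n$-th powers. Thus the task splits into two phases: a \emph{preprocessing phase}, in which we find an appropriate small $n \mid p-1$ together with the structural data needed to work effectively inside $\cH_n$; and an \emph{online phase}, in which we solve $x^e = A$ restricted to $x \in \cH_n$.

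For preprocessing I would invoke \cite[Lemma~11]{BGKS}, which produces, in time $e^{1/2+o(1)} + p^{\varepsilon}(\log p)^{O(1)}$, an integer $n \mid p-1$ bounded by an absolute constant $c_2$, together with auxiliary information (in essence, a concrete description of a generator of $\cH_n$) that removes the usual need for a probabilistic primitive-root search. The $e^{1/2+o(1)}$ term reflects the baby-step/giant-step cost of locating this generator inside the $e$-torsion structure of $\F_p^*$, while the $p^{\varepsilon}(\log p)^{O(1)}$ term accounts for enumerating small candidate divisors of $p-1$ and deterministically certifying them.

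For the online algorithm $\cA_n$ I would apply \cite[Corollary~40]{BGKS}. Given the preprocessed data and an input $A \in \F_p$, one first tests whether $A$ lies in the image of the $e$-th power map on $\cH_n$; when it does, all preimages in $\cH_n$ are enumerated by reducing root extraction to an explicit linear congruence in the exponent inside the cyclic group $\cH_n$. Since the number of such solutions is at most $e/\gcd(e,n) \le e$ and each is produced by a single exponentiation in $\F_p$, the total cost per query is $e(\log p)^{O(1)}$.

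The main obstacle to expand mentally is the preprocessing step: it is not a priori clear that a bounded $n$ with the required interaction with the $e$-th power map exists at all, nor that it can be found deterministically within the claimed time. This is exactly the content of \cite[Lemma~11]{BGKS}; once a generator of $\cH_n$ is in hand, the online algorithm from \cite[Corollary~40]{BGKS} requires no further new ideas. The present lemma is therefore obtained by directly combining these two results with the dictionary between ``$n \mid \ind x$'' and ``$x \in \cH_n$'', repackaged in a form convenient for the applications in the next sections.
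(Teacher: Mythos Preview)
Your proposal is correct and matches the paper's approach exactly: the paper simply states that the lemma follows by combining \cite[Lemma~11]{BGKS} and \cite[Corollary~40]{BGKS}, with no further details given. Your write-up is in fact more detailed than the paper's own treatment, but the underlying decomposition into a preprocessing step (Lemma~11) and an online root-extraction step (Corollary~40), together with the identification of the condition $n \mid \ind x$ with membership in the index-$n$ subgroup, is precisely what is intended.
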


\section{Number of interpolating polynomials}

In this section we estimate the number of polynomials $f\in\F_p[X]$ such that $f(x_i)^e=A_i$ ($i=0,\ldots, k$) for some pairwise distinct $x_0,\ldots, x_k\in\F_p$ and arbitrary $A_0,\ldots, A_k\in\F_p$.

The following result is a special case of \cite[Lemma~4.1]{VS12}.

\begin{lemma}\label{lemma:shifted_subgroups}
Assume that for a fixed integer $m\geq 1$ we have
\begin{equation}\label{eq:e}
p\geq \(2m \left\lfloor e^{1/(2m+1)} \right\rfloor+2m+2 \)e.
\end{equation}
Then for pairwise distinct $\xi_1,\ldots, \xi_m\in\F_p^*$ and arbitrary $\mu_1,\ldots, \mu_m\in\F_p^*$ the bound
$$
\#\(\cG_e\cap(\mu_1\cG_e+\xi_1)\cap \ldots \cap (\mu_m\cG_e+\xi_m) \)\ll e^{(m+1)/(2m+1)}
$$
holds, where the implied constant depends on $m$.
\end{lemma}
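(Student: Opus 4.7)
Since the statement is asserted to be a special case of \cite[Lemma~4.1]{VS12}, the natural plan is to verify that the hypotheses of that general result are satisfied and then invoke it. Using that $y\in\cG_e$ if and only if $y^e=1$, the intersection
$$
\cG_e\cap(\mu_1\cG_e+\xi_1)\cap\cdots\cap(\mu_m\cG_e+\xi_m)
$$
is precisely the set of $x\in\F_p^*$ satisfying $x^e=1$ together with $(x-\xi_j)^e=\mu_j^e$ for $j=1,\ldots,m$. This is exactly the kind of system treated in \cite[Lemma~4.1]{VS12}, with the spacing condition~\eqref{eq:e} matching the hypothesis of that lemma, so the bound follows upon direct specialization.

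For a self-contained treatment one would reprise the Stepanov polynomial method in the form developed for intersections of shifts of multiplicative subgroups. Let $N$ denote the cardinality to be bounded. One fixes parameters $K,L,D$ with $KL^m$ slightly exceeding $ND$ and constructs an auxiliary polynomial
$$
\Phi(X)=\sum_{\substack{0\le k<K\\ 0\le \ell_1,\ldots,\ell_m<L}} c_{k;\ell_1,\ldots,\ell_m}\,X^k(X-\xi_1)^{\ell_1 e}\cdots(X-\xi_m)^{\ell_m e}
$$
with coefficients $c_{k;\ell_1,\ldots,\ell_m}\in\F_p$ chosen so that $\Phi$ vanishes to order at least $D$ at every point of the intersection. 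The dimension count $KL^m>ND$ ensures that the resulting homogeneous linear system in the $c$'s has a nonzero solution. Since $(x-\xi_j)^e$ is the constant $\mu_j^e$ on the intersection, counting zeros of $\Phi$ with multiplicity against its total $X$-degree, which is at most $K-1+mL(e-1)$, and optimizing in $K$, $L$ and $D$ yields $N\ll e^{(m+1)/(2m+1)}$.

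The delicate step, and the main obstacle, is the nontriviality argument: one must rule out that the chosen $\Phi$ vanishes identically as a polynomial in $X$. This is where the hypotheses that the $\xi_j$ are pairwise distinct and the $\mu_j$ are nonzero become essential, and where~\eqref{eq:e} is invoked to guarantee that exponents of the form $\ell_j e$ stay below $p-1$, so that the reductions using $x^e=1$ and $(x-\xi_j)^e=\mu_j^e$ do not collapse distinct monomials in $\F_p[X]$. This nontriviality step is the heart of \cite[Lemma~4.1]{VS12}, which we therefore invoke directly rather than reproduce.
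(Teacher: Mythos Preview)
Your proposal is correct and matches the paper's approach exactly: the paper gives no proof at all beyond the sentence ``The following result is a special case of \cite[Lemma~4.1]{VS12}'', so your direct invocation of that reference is precisely what is done. Your additional sketch of the Stepanov auxiliary-polynomial argument is accurate and goes beyond what the paper provides.
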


\begin{lemma}\label{lemma:bound_for_poly}
Let $d\geq 2$.
For any fixed $\varepsilon>0$ there exist an integer $m$ and a constant $c>0$ such that for a prime $p$ and a positive integer $e\mid p-1$ with $c<e<p^{1-\varepsilon}$,  for any pairwise distinct $x_0,\ldots, x_{d(m-1)+1}\in\F_p$ and arbitrary $A_0,\ldots, A_{d(m-1)+1}\in\F_p^*$ the number of monic polynomials $f\in\F_p[X]$ of degree at most $d$ such that 
$$
f(x_i)^e=A_i, \quad i=0,\ldots, d(m-1)+1
$$
is at most $e^{d-1/2+\varepsilon}$. 
\end{lemma}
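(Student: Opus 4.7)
The plan is to fix $d-1$ of the values $f(x_j)$ so that $f$ becomes a one-parameter family of monic degree-$d$ polynomials, and then to apply Lemma~\ref{lemma:shifted_subgroups} to control the remaining parameter. First, one reduces to the case where every $A_i$ is a nonzero $e$-th power (otherwise the count is zero) and fixes $y_i\in\F_p^*$ with $y_i^e=A_i$, so the condition $f(x_i)^e=A_i$ becomes $f(x_i)\in y_i\cG_e$, where $\cG_e\subset\F_p^*$ is the subgroup of order $e$. Then I fix the values $f(x_2),\ldots,f(x_d)$ inside their prescribed cosets (at most $e^{d-1}$ choices), parametrise the remaining monic degree-$d$ interpolants by $t=f(x_0)$, and write
\[
f_t(X)=f_0(X)+t\,K(X),\qquad K(X)=\prod_{j=2}^{d}\frac{X-x_j}{x_0-x_j},
\]
where $f_0$ is the unique monic polynomial of degree $d$ satisfying $f_0(x_0)=0$ and $f_0(x_j)=f(x_j)$ for $j=2,\ldots,d$; note that $f_0\not\equiv 0$ and $f_0(x_j)\ne 0$ for those $j$.

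For each $i\in\{1\}\cup\{d+1,\ldots,d(m-1)+1\}$ the constraint $f_t(x_i)\in y_i\cG_e$ translates, after setting $u=t/y_0\in\cG_e$, into $u\in\nu_i\cG_e+\xi_i$ with $\nu_i\in\F_p^*$ and $\xi_i=-y_0^{-1}K(x_i)^{-1}f_0(x_i)$ (here $K(x_i)\ne 0$ because $x_i$ is distinct from $x_2,\ldots,x_d$). To apply Lemma~\ref{lemma:shifted_subgroups} I need to extract a growing number of these constraints with pairwise distinct nonzero shifts $\xi_i$ from the pool of $d(m-2)+2$.

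This distinctness count is the combinatorial core of the proof. At most $d-1$ of the $\xi_i$ vanish, because $\xi_i=0$ forces $f_0(x_i)=0$ while $f_0$ has only $d$ roots, of which $x_0$ is one and $x_2,\ldots,x_d$ are not. For any fixed index $j$ with $\xi_j\ne 0$, the coincidence $\xi_i=\xi_j$ is equivalent to $P(x_i)=0$, where $P(X)=f_0(X)K(x_j)-f_0(x_j)K(X)$ has degree exactly $d$ (its leading coefficient is $K(x_j)\ne 0$), so this fiber contains at most $d$ indices. Every fiber of $i\mapsto\xi_i$ therefore has size at most $d$, and discarding the zero fiber leaves at least $\lceil(d(m-2)+2)/d\rceil-1\ge m-2$ distinct nonzero shifts.

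Applying Lemma~\ref{lemma:shifted_subgroups} with these $m-2$ shifts bounds the number of valid $u$ by $O(e^{(m-1)/(2m-3)})$, giving a total count at most $O(e^{d-1+(m-1)/(2m-3)})$. Writing $(m-1)/(2m-3)=1/2+1/(2(2m-3))$, choosing $m$ large enough in terms of $\varepsilon$ forces the exponent down to $d-1/2+\varepsilon/2$, and the implied constant is absorbed by a factor $e^{\varepsilon/2}$ once $e$ exceeds a threshold $c=c(\varepsilon)$. The hypothesis \eqref{eq:e} of Lemma~\ref{lemma:shifted_subgroups} for $m-2$ reduces to a condition of the form $e^{1+1/(2m-3)}\ll p$, which is secured by $e<p^{1-\varepsilon}$ as soon as $(1-\varepsilon)(1+1/(2m-3))<1$, again achievable by taking $m$ large in terms of $\varepsilon$. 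The principal obstacle is the fiber-size bound underlying the distinctness argument; everything else is bookkeeping and parameter selection.
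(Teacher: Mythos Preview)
Your argument is correct and follows essentially the same route as the paper's proof: represent $f$ via Lagrange-type interpolation on $d+1$ of the nodes, freeze all but one of the values in their prescribed $\cG_e$-cosets, translate the remaining constraints into membership conditions $u\in\nu_i\cG_e+\xi_i$, use a degree-$d$ fibre count to extract enough pairwise distinct nonzero shifts from the surplus of $\approx d(m-2)$ constraints, and finish with Lemma~\ref{lemma:shifted_subgroups}. The only bookkeeping differences are that the paper leaves all $d+1$ coset parameters free and divides by $e$ at the end to pass to monic polynomials (whereas you build the monic constraint in from the start and freeze only $d-1$ values), and that the paper phrases the fibre argument via the rational function $\psi(X)=\sum_{i\ge 1} c_i\lambda_i L_i(X)/(c_0L_0(X))$ rather than your $f_0/K$; these are equivalent formulations. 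If anything, your treatment of the distinct-nonzero-shift count is slightly more careful than the paper's, which asserts $m$ distinct values without explicitly discarding the zero fibre.
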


\begin{proof}
Choose $m$ such that the it satisfies~\eqref{eq:e} with $\varepsilon<1/(4m+1)$.  

For $0\leq i\leq d$ let $L_i(X)$, be the $i$th Lagrange interpolation polynomial on the points $x_0,\ldots, x_{d}$, that is
\begin{equation}\label{eq:Lag}
L_i(X)=\prod_{j\neq i}\frac{X-x_j}{x_i-x_j}, \quad 0\leq i\leq d.
\end{equation}
Define $c_i\in\F_p$,  $0\leq i\leq d(m-1)+1$, as $c_i^e=A_i$. Then $f(X)$ has the form
$$
f(X)=\sum_{i=0}^{d} c_i\lambda_i L_i(X)
$$
for some $\lambda_0,\ldots,\lambda_{d}\in\cG_e$. 

Consider the rational function
$$
\psi(X)=\sum_{i=1}^{d} \frac{c_i\lambda_i L_i(X)}{c_0 L_0(X)}\in\F_p(X).
$$
If it is the zero function, then $A_1=f(x_1)^e=\(c_0 \lambda_0 L_0(x_1)\)^e=0$ by~\eqref{eq:Lag}, a contradiction. As both the numerator and the denominator have degree at most $d$, it takes each value at most $d$ times. 
Thus, there are points $x_{j_1},\ldots, x_{j_m}$, $d<j_1<\ldots<j_{m}\leq d(m-1) +1 $ such that $\psi$ takes different values in these points.

For $1\leq \ell\leq m$ we have
$$
f(x_{j_\ell})=\sum_{i=0}^{d} c_i\lambda_i L_i(x_{j_\ell})=c_{j_\ell}\lambda_{j_\ell}
$$
for some $\lambda_{j_1},\ldots,\lambda_{j_m}\in\cG_e$. Then
$$
 \lambda_0=\lambda_{j_\ell}\frac{c_{j_\ell}}{c_0 L_0(x_{j_\ell})}-   \psi(x_{j_\ell}),   \qquad  1\leq \ell \leq m, 
$$
so
$$q
\lambda_0 \in \bigcap_{\ell=1}^{m}\(\mu_{\ell}\cG_e-\xi_\ell\)
$$
with
$$
\mu_{\ell}= \frac{c_{j_\ell}}{c_0 L_0(x_{j_\ell})}\in\F_p^* \quad \text{and} \quad \xi_\ell=\psi(x_{j_\ell})  \in\F_p^*, \quad 1\leq \ell \leq m.
$$

For fixed $\lambda_1,\ldots, \lambda_d\in\cG_e$, there are at most $e^{1/2+\varepsilon}$ choices for $\lambda_0$ by Lemma~\ref{lemma:shifted_subgroups}. As for each monic polynomials there are exactly $e$ non-monic polynomials with the same property, we obtain the result.
\end{proof}

\begin{lemma}\label{lemma:frac_interpol} 
Let $d\geq 1$. For any pairwise distinct $x_1,\ldots, x_{2d}\in\F_p$, arbitrary $y_1,\ldots, y_{2d}\in\F_p^{*}$ and $h\in\F_p^*$, there is at most one monic polynomial $f\in\F_p[X]$ of degree at most $d$ such that
\begin{equation}\label{eq:frac_interpol}
\frac{f(x_i)}{f(x_i+h)}=y_i, \quad \text{for } i=1,\ldots, 2d.
\end{equation}
\end{lemma}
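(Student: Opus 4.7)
The plan is to prove uniqueness by a cross-multiplication argument. Assume $f_1, f_2 \in \F_p[X]$ are two monic polynomials of degree at most $d$ both satisfying \eqref{eq:frac_interpol}, and form
$$
F(X) = f_1(X)\, f_2(X+h) - f_2(X)\, f_1(X+h).
$$
The hypothesis $y_i \in \F_p^*$ ensures $f_j(x_i+h) \ne 0$ for $j=1,2$, so the shared value
$y_i = f_1(x_i)/f_1(x_i+h) = f_2(x_i)/f_2(x_i+h)$
cross-multiplies to $F(x_i) = 0$ for every $i = 1, \ldots, 2d$.

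Next I would bound $\deg F$ from above. Setting $d_j = \deg f_j \le d$, both products $f_1(X) f_2(X+h)$ and $f_2(X) f_1(X+h)$ are monic of the same degree $d_1 + d_2$, hence their leading terms cancel in $F$ and
$$
\deg F \le d_1 + d_2 - 1 \le 2d - 1 < 2d.
$$
A polynomial of degree strictly less than $2d$ with $2d$ distinct roots must vanish identically, so $F \equiv 0$. Equivalently, the rational function $R(X) = f_1(X)/f_2(X)$ satisfies the identity $R(X+h) = R(X)$.

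The final step is to deduce from this periodicity that $R$ is constant, which together with monicity forces $f_1 = f_2$. Writing $R = P/Q$ in lowest terms, coprimality combined with $P(X) Q(X+h) = Q(X) P(X+h)$ gives $Q(X) \mid Q(X+h)$, and matching degrees and leading coefficients then yields $Q(X+h) = Q(X)$, and similarly $P(X+h) = P(X)$. The existence of $2d$ distinct $x_i \in \F_p$ forces $d < p$, and for any $\varphi \in \F_p[X]$ of degree $0 < k < p$ the coefficient of $X^{k-1}$ in $\varphi(X+h) - \varphi(X)$ equals $kh \ne 0$. Hence $P$ and $Q$ are constants, $R$ is constant, and (matching monic leading terms, which also forces $\deg f_1 = \deg f_2$) the constant is $1$, giving $f_1 = f_2$.

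The main subtlety lies in this last step: over characteristic zero a periodic polynomial is automatically constant, whereas in $\F_p[X]$ the polynomial $X^p - X$ is $h$-periodic for every $h \in \F_p$. The inequality $d < p$, which is a free consequence of having $2d$ distinct interpolation nodes inside $\F_p$, is precisely what eliminates this obstruction.
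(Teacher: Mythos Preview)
Your argument is correct and follows the same strategy as the paper: form the cross product $F(X)=f_1(X)f_2(X+h)-f_2(X)f_1(X+h)$, observe that the monic leading terms cancel so $\deg F\le 2d-1$, and use the $2d$ distinct roots $x_i$ to force $F\equiv 0$. You then go further than the paper, which at this point simply asserts $f=g$ without comment: your periodicity argument for $R=f_1/f_2$, together with the observation that $2d\le p$ rules out nonconstant $h$-periodic polynomials of degree at most $d$ in $\F_p[X]$, supplies a justification the paper leaves implicit.
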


\begin{proof}
If $f(X)$ and $g(X)$ are two distinct polynomials with~\eqref{eq:frac_interpol}, then
$$
\frac{f(X)}{f(X+h)}-\frac{g(X)}{g(X+h)}=\frac{f(X)g(X+h)-g(X)f(X+h)}{f(X+h)g(X+h)}
$$
vanishes on $x_1,\ldots, x_{2d}$. As the degree of the nominator is at most $2d-1<\#\cI$, we have $f(X)=g(X)$.
\end{proof}

\begin{lemma}\label{lemma:LES}
 Let $d\geq 1$. For any pairwise distinct $x_1,\ldots, x_{\ell}\in\F_p$, arbitrary $y_1,\ldots, y_{\ell}\in\F_p^{*}$ and $h\in\F_p^*$ consider the matrices
$$
M_s=
\begin{pmatrix}
y_{1}(x_1+h)^{k} -x_1^{k} & \ldots & y_{1}(x_1+h)-x_1  & y_1-1\\
\vdots & \ddots & \vdots & \vdots \\
y_{s}(x_s+h)^{k} -x_s^{k} & \ldots & y_{s}(x_s+h)-x_s  & y_s-1\\
\end{pmatrix}  \in\F_p^{s \times (k+1)}
$$
for $1\leq s\leq \ell$. If $M_{\ell-1}$ has rank $r$, then $M_{\ell}$ has rank $r$ either for at most one or all choices of $y_\ell$.
\end{lemma}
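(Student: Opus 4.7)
The plan is to exploit the fact that the last row of $M_\ell$ depends on $y_\ell$ in an affine (indeed, linear-plus-constant) manner, so the condition that $M_\ell$ has rank $r$ reduces to a single linear equation on $y_\ell$ over $\F_p$.

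First, I would rewrite the $\ell$-th row of $M_\ell$ as $y_\ell\, a_\ell - b_\ell$, where
$$
a_\ell = \bigl((x_\ell+h)^k,\, (x_\ell+h)^{k-1},\, \ldots,\, x_\ell+h,\, 1\bigr)
\mand
b_\ell = \bigl(x_\ell^k,\, x_\ell^{k-1},\, \ldots,\, x_\ell,\, 1\bigr).
$$
Since appending a row cannot decrease the rank, $M_\ell$ has rank $r$ if and only if the new row $y_\ell a_\ell - b_\ell$ lies in the row span $W \subseteq \F_p^{k+1}$ of $M_{\ell-1}$, which by hypothesis has dimension $r$.

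Next, I would pass to the quotient space $V = \F_p^{k+1}/W$ and let $\bar a, \bar b \in V$ denote the images of $a_\ell, b_\ell$. The condition becomes the single vector equation $y_\ell\, \bar a = \bar b$ in $V$. Two cases:

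\textbf{Case 1: $\bar a = 0$, i.e. $a_\ell \in W$.} Then the equation reduces to $\bar b = 0$, which is a condition independent of $y_\ell$. So either it fails (no valid $y_\ell$) or it holds for every $y_\ell \in \F_p$.

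\textbf{Case 2: $\bar a \neq 0$.} Then the map $\F_p \to V$ sending $y \mapsto y\,\bar a$ is injective (since $\F_p$ is a field and $\bar a \neq 0$), so the equation $y_\ell\, \bar a = \bar b$ has at most one solution.

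In all cases, the set of $y_\ell \in \F_p$ for which $M_\ell$ still has rank $r$ is either all of $\F_p$ or has cardinality at most $1$, which is exactly the claim. There is no real obstacle here: the entire argument is a routine reduction to the elementary fact that an affine condition $y\, \bar a = \bar b$ over a field has $0$, $1$, or $|\F_p|$ solutions depending on whether $\bar a$ vanishes and whether $\bar b$ lies in its span. \qed
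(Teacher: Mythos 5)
Your proof is correct, and it takes a genuinely different (and cleaner) route than the paper's. The paper argues via minors: it considers the determinants of the square submatrices of $M_\ell$ that involve the new last row (the paper writes $(k+1)\times(k+1)$, though the relevant minors are the $(r+1)\times(r+1)$ ones detecting a rank increase beyond $r$), observes that each such determinant is an affine function of $y_\ell$, and concludes that the set of $y_\ell$ making them all vanish is either all of $\F_p$ or at most a single point. You instead work directly with row spans: writing the $\ell$-th row as $y_\ell a_\ell - b_\ell$, you note that $\mathrm{rank}\,M_\ell = r$ iff this row lies in the row span $W$ of $M_{\ell-1}$, i.e.\ iff $y_\ell \bar a = \bar b$ in the quotient $\F_p^{k+1}/W$, and such an affine equation in one variable over a field has $0$, $1$, or all solutions. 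Both proofs rest on the same underlying observation — that the condition is affine in $y_\ell$ — but your formulation packages the ``all new minors vanish simultaneously'' step into a single vector equation, avoiding the bookkeeping of which submatrices to examine and making the case analysis immediate.
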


\begin{proof} 
Consider the submatrices $N\in\F_p^{(k+1)\times(k+1)}$ of $M_{\ell}$ which are not submatrices of $M_{\ell-1}$. The determinant $\det N$ is either does not depend on $y_{\ell}$, or a linear polynomial of it. If all such determinants are zero, then $M_{\ell}$ has rank $r$ independently of the choice of $y_\ell$. Otherwise, there is only one value for $y_\ell$ such that the determinant of a submatrix $N$ vanishes. 
\end{proof}

\section{Proof of Theorem~\ref{thm:Large_e}}
\label{sec:proof}

First we note that  there are absolute constants $c_1, c_2> 0$ such that 
if the condition~\eqref{eq:ed cond} is satisfied 
then for 
\begin{equation}
\label{eq:choice H}
H = \fl{c_1 \max\left\{d^{3} e^2 p^{-1} ,   d^{7/3}e^{2/3}\right\} }
\end{equation}
we have $H < p$ and 
then under the conditions of Lemma~\ref{lem:fI G}  we have 
\begin{equation}
\label{eq:N<H}
N_\psi(\cI,\cG)  < H, 
\end{equation}
for any rational function $\psi \in \Fbar_p(X)$ of degree at most $d$, which is a nontrivial perfect power.

Now, since $f/g$ is a nontrivial perfect power
 Lemma~\ref{lem:fI G} applies to $\psi = f/g$. Hence taking $H$ as in~\eqref{eq:choice H}, 
 so the inequality~\eqref{eq:N<H} is satisfied, we see that querring 
 $\Oef$ and $\Oeg$ for $x =1, \ldots, H$,  we have $\Oef(x) \ne \Oeg(x)$ 
 unless $f=g$.

\section{Proof of Theorem~\ref{thm:interpol}} 

\subsubsection*{Step 1.}
Let $n$ and $\cA_n$ be as in Lemma~\ref{lemma:powers}

We call the oracle $\Oef$ on the inputs $x=0,\ldots, (2d-1)n^2+n$ and let $A_x=\Oef(x)$ for $x=0,\ldots, (2d-1)n^2+n$. Then we recover all the polynomials $f\in\F_p[X]$ such that
$$
f(x)^e=A_x, \qquad x=0,\ldots,(2d-1)n^2+n.
$$

If $A_y=0$ for some $y$, then $y$ must be a zero of $f$, thus it is enough to find all $g(X)$ of degree at most $d-1$ such that $g(X)^e=A_x/(x-y)^e$. Thus we can assume, that 
$$
A_x \ne 0, \qquad x=0,\ldots,(2d-1)n^2+n.
$$

For all $i$ with $0\leq i \leq (2d-1)n+1$, by the pigeonhole principle there are $x_i$ and $h_i$ with $in\leq x_i<x_i+h_i\leq (i+1)n$ such that 
$$
\ind f(x_i+h_i)  \equiv \ind f(x_i) \mod n.
$$
Specially, $1\leq h_i\leq n$, so there is a subset $\cI\subset\{i: 0\leq i \leq (2d-1)n+1 \}$  of size $\#\cI=2d$ and $1\leq h\leq n$ such that $h_i=h$ for $i\in\cI$, that is, 
$$
\ind f(x_i+h)  \equiv \ind f(x_i) \mod n,\quad i\in\cI.
$$

To find $\cI$, we just try all pairs $A_x,A_{x+h}$. 

By Lemma~\ref{lemma:powers}, we can extract all such $y_i$ that $y_i^e=A_{x_i}/A_{x_i+h}$ for $i\in\cI$.

\subsubsection*{Step 2.}
In order to obtain candidates for $f$, we consider  the system of 
equations  for the coefficients $f_0,\ldots, f_{d-1}$ of $f$
\begin{equation}\label{eq:LES}
\begin{split}
 x_{i_j}^{d}  + \sum_{k=0}^{d-1} f_k x_{i_j}^{k}  =y_{i_j}&\((x_{i_j}+h)^d +  \sum_{k=0}^{d-1} f_k(x_{i_j}+h)^k\),  \\
 & 1\leq j\leq \ell, 
\end{split}
\end{equation}
for some $1\leq i_1<\dots<i_\ell\leq d$ and fixed $y_{i_1},\ldots, y_{i_\ell}$, where  $1\leq \ell \leq d$.

Now we proceed by induction on $1\leq\ell\leq d$ to obtain a full rank system of equations~\eqref{eq:LES}.

For $\ell=1$ we choose $i_1=1$. As $h\neq 0$, \eqref{eq:LES} is nontrivilal for any choice of $y_{i_1}$.

Next, assume that we have fixed the values $y_{i_1},\ldots, y_{i_\ell}$ for some $1<\ell<d$ such that the~\eqref{eq:LES}
has a full rank. In the following we choose the index $i_{\ell+1}\leq 2d$ and the value $y_{i_{\ell+1}}$ such that the system of equations~\eqref{eq:LES} remains  full rank.

We start with $i_{\ell+1}=i_{\ell}+1$. If the system of equations~\eqref{eq:LES}  is singular independently of the choice of $y_{i_{\ell+1}}$, we increase the index $i_{\ell+1}$. Otherwise, we fix the value of $y_{\ell+1}$. If~\eqref{eq:LES} becomes singular with this choice,  we also increase the index $i_{\ell+1}$.  Then we obtain either a full rank system of equations ~\eqref{eq:LES}, or  $i_{\ell+1}=2d$, but~\eqref{eq:LES} is singular. In the latter case the system of equations does not have solution by Lemma~\ref{lemma:frac_interpol}, so we terminate.

Finally, we terminate if we reach $\ell=d$.

If we have a regular linear system of equations with $\ell=d$, we solve it for $f_0,\ldots, f_{d-1}$ and test whether $f(x)^e=A_x$ for every $x=0,\ldots, (2d-1)n^2+n$. 

When we terminate, we have fixed the values $y_{i_1}, \ldots, y_{i_k}$ with $1\leq  k\leq 2d$. Let $R$ be the set of indices $j$ such that $i_j$ appears in~\eqref{eq:LES}, let $S$ be the set of indices $j$ such that  $y_{i_j}$ is fixed, but the equation belongs to this index leads to a singular system of equations. Finally, let $T$ be the set of indices $j$ such that  $y_{i_j}$ is not fixed. Then $R\cup S\cup T =\{1,\dots, 2d\}$. 

By Lemma~\ref{lemma:LES}, the value $y_{i_j}$, $j\in S$, is uniquely determined by $y_{i_1}, \ldots, y_{i_{j-1}}$ and $i_1, \ldots, i_{j}$. Thus for fixed $R$,  $S$, $T$ there are at most $e^d$ choices for the tuples $(y_{i_1}, \ldots, y_{i_k})$ . We can choose $R,S,T$ in at most $3^{2d}$ ways.
 Thus the running time of this step is $d^2\, e (\log p)^{O(1)}  + d^2 3^{2d}e^d(\log p)^{O(1)}$.

\textit{Step 3.} Let $\cF$ be the set of all possible polynomials obtained in the previous step and let $m$ as in Lemma~\ref{lemma:bound_for_poly}. Let $A_x$ the output of the oracle $\Oef(x)$  for $x=0,1\ldots,d(m-1)+1$ and put
$$
\widetilde \cF=\{g\in\cF~:~g^{e}(x)=A_x \ 0\leq x\leq d(m-1)+1\}.
$$
By Lemma~\ref{lemma:bound_for_poly} we have $\#\widetilde \cF\leq e^{d-1/2+\varepsilon}$.
We can assume, that all of them are square-free. This reduction is done in time $e^d d^2 (\log p)^{O(1)}$. Finally, we use the identity testing algorithm of Theorem~\ref{thm:Large_e} for the oracles $\mathfrak{D}_{e,f}$ and $\mathfrak{D}_{e,g}$ for all $g\in\cF$. 

The running time of the first step is bounded by
$$
e^{1/2+o(1)}+p^{\varepsilon}(\log p)^{O(1)}+d^2\, e (\log p)^{O(1)}
$$
of the second step is bounded by
$$
d^2 3^{2d}e^d(\log p)^{O(1)} 
$$
and of the third step is bounded by
$$
d3^{2d}e^d(\log p)^{O(1)}+
e^{d-1/2+\varepsilon}(\log p)^{O(1)}\max\{d^2e^2p^{-1},d^{7/3}e^{2/3} \}.
$$
Replacing $\varepsilon$ with $\varepsilon/2$ and having $3^{2d}\leq e^{\varepsilon d}$,  we obtain the result.

 
%
\section*{Acknowledgement}

The authors are very grateful to Umberto Zannier for some clarifications 
concerning the results of~\cite{CorZan}. 

The the research of M.K. was supported in part by DFG   
 grants and the Hausdorff grant EXC'59-1, of L.M. 
 was is supported by the Austrian Science Fund (FWF): Project I1751-N26,
 and  
 of I.S. was  supported in part by  
 the Australian Research Council Grant DP170100786. 
%

%

\begin{thebibliography}{MvOV10}

\bibitem{BachShallit} E. Bach and J. Shallit, Algorithmic Number Theory, MIT Press, Cambridge, MA, 1996.



\bibitem{BGKS}
J.~Bourgain, M.~Z.~Garaev, S. V. Konyagin and I. E. Shparlinski,
`On the hidden shifted power problem',
{\it SIAM J. Comp.\/}, {\bf 41} (2012),   1524--1557.
  
 \bibitem{Chang} M.-C. Chang, `Sparsity of the intersection of polynomial
images of an interval', {\it Acta Arith.\/}, {\bf 165} (2014), 243--249.


 \bibitem{CCGHSZ} M.-C. Chang, J. Cilleruelo, M. Z. Garaev, J.  Hern\'andez,
I. E. Shparlinski and A. Zumalac\'{a}rregui,
`Points on curves in small boxes and applications',
{\it Michigan Math. J. \/}, {\bf 63} (2014), 503--534. 
  
 \bibitem{CGOS} J. Cilleruelo, M. Z. Garaev,  A. Ostafe and
I. E. Shparlinski,
`On the concentration of points of polynomial maps
and applications',
{\it Math. Zeit.\/}, {\bf 272} (2012), 825--837.

\bibitem{CSZ} J. Cilleruelo, I. E. Shparlinski and A. Zumalac\'{a}rregui,
`Isomorphism classes of elliptic curves over a finite field in some
thin families',  {\it Math. Res. Letters\/}, {\bf 19} (2012), 335--343.


%
  
  \bibitem{CorZan} P. Corvaja  and U. Zannier, 
 `Greatest common divisors of $u-1$, $v-1$ in positive characteristic 
and rational points on curves over finite fields', 
{\it J. Eur. Math. Soc.\/} {\bf 15} (2013), 1927--1942.


\bibitem{vDam} W. van Dam,
`Quantum algorithms  for weighing matrices and quadratic residues',
{\it Algorithmica\/}, {\bf 34} (2002), 413--428.

\bibitem{vDamHallgIp} W. van Dam, S. Hallgren and L. Ip,
`Quantum algorithms for some hidden shift problems', {\it SIAM J.
Comp.\/}, {\bf 6} (2006),  763--778.

\bibitem{vzGG}
J. von zur Gathen and J. Gerhard, {\it Modern computer algebra\/},
Cambridge University Press, Cambridge, 2013.
  
\bibitem{GomShp} D. G\'omez-P\'erez and I. E. Shparlinski,
`Subgroups generated by rational functions 
in finite fields',  {\it Monat. Math.\/}, {\bf 176} 
(2015), 241--253.

 

\bibitem{IKSSS} G. Ivanyos, M. Karpinski, M. Santha, N. Saxena 
and  I. E.~Shparlinski, `Polynomial interpolation and identity testing 
from high powers over finite fields',   
{\it Algorithmica\/}, {\bf  80} (2018), 560--575.
%
  
\bibitem{LN} R. Lidl and H. Niederreiter,  {\it Finite Fields\/},
Cambridge Univ. Press, Cambridge, 1997.


\bibitem{Ost} A. Ostafe, 
Polynomial values in affine subspaces over finite fields,
\textit{J. D'Analyse Math.\/}, (to appear). 

\bibitem{RusShp} A. C. Russell and I. E. Shparlinski,
`Classical and quantum algorithms
for function reconstruction via character evaluation',
{\it J. Compl.\/},  {\bf 20} (2004), 404--422.

\bibitem{Shp1}
I. E. Shparlinski, `Products with variables from low-dimensional affine
 spaces and shifted power identity testing in finite fields', 
{\it J. Symb.  Comp.\/},  {\bf 64} (2014), 35--41.
  
  \bibitem{Shp2}
I. E. Shparlinski, `Polynomial values in small subgroups of finite fields', 
{\it Revista Matem. Iber.\/},  {\bf 32} (2016), 1127--1136.


\bibitem{VS12}  I. V. Vyugin and  I. D.  Shkredov,
`On additive shifts of multiplicative subgroups', 
{\it Mat. Sb.\/} {\bf 203} (2012), no. 6, 81--100

%


\end{thebibliography}

\end{document}